\newtheorem{theorem}{Theorem}[section]
\newtheorem{proposition}[theorem]{Proposition}
\newtheorem{remark}[theorem]{Remark}
\newcommand{\mc}[1]{{\mathcal #1}}
\newcommand{\mf}[1]{{\mathfrak #1}}
\newcommand{\bs}[1]{{\boldsymbol #1}}
\newcommand\bp{{\mathbf p}}
\newcommand\ve{\varepsilon}
\newcommand\ft{\frak t}
\newcommand\br{{\mathbf r}}
\numberwithin{equation}{section}
\begin{document}

\title{Microscopic derivation of an isothermal thermodynamic transformation}
\author{Stefano Olla}
  \address{CEREMADE, UMR CNRS 7534\\
  Universit\'e Paris-Dauphine\\
  75775 Paris-Cedex 16, France \\
  \texttt{{ olla@ceremade.dauphine.fr}}}

\date{\today}
\begin{abstract}
We obtain macroscopic isothermal thermodynamic transformations by
space-time scalings of a microscopic Hamiltonian dynamics in contact
with a heat bath. The microscopic dynamics is given by a chain of anharmonic
oscillators subject to a varying tension (external force) and the
contact with the heat bath is modeled by independent Langevin dynamics
acting on each particle. After a diffusive space-time scaling and
cross-graining, the profile of volume converges to the solution of a
deterministic diffusive equation with boundary conditions given by the
applied tension. This defines an irreversible thermodynamic transformation from
an initial equilibrium to a new equilibrium given by the final tension
applied. Quasistatic reversible isothermal transformations are then
obtained by a further time scaling. Heat is defined as the total flux
of energy exchanged between the system and the heat bath. Then we
prove that the relation between the limit heat, work, free energy and
thermodynamic entropy agree with the first and second principle of
thermodynamics.  

\end{abstract}
 \thanks{This work has been partially supported by the
  European Advanced Grant {\em Macroscopic Laws and Dynamical Systems}
  (MALADY) (ERC AdG 246953). I thank Claudio Landim for stimulating
  conversations on quasi static limits and for the help in the proof of
  proposition \ref{quasistatic}}
\maketitle

\section{{Introduction}\label{sec:intro}}

Isothermal transformations are fondamental in thermodynamics, in
particular they are one of the components of the Carnot cycle. 
As often in thermodynamics, they represent \emph{idealized}
transformations where the system is maintained at a constant
temperature by being in constant contact with a \emph{large} heat reservoir
(\emph{heat bath)}. An isothermal thermodynamic transformations
connects two equilibrium states $A_0$ and $A_1$ at the same
temperature $T$, by changing the exterior forces applied. According to the
first law of thermodynamics, the change in the internal energy is
given by $U_1 - U_0 = W + Q$, where $W$ is the work made by the
exterior forces and $Q$ is the heat (energy) exchanged with the thermal
reservoir. The second law prescribes that the change of the free energy
$F= U - TS$ (where $S$ is the thermodynamic entropy), satisfy the Clausius
inequality $F_1 - F_0 \le W$, with equality satisfied for
\emph{reversible quasistatic} transformations. In the quasistatc
transformation we can then identify $Q = T(S_1 - S_0)$. 

The purpose of this article is to prove mathematically that the
termodynamic behavior of isothermal transformations, as described
above, can be obtained by proper space and time scaling of a
microscopic dynamics. We consider a one dimensional system, where the
equilibrium thermodynamic intensive parameters are given by the
temperature $T= \beta^{-1}$ and the tension
(or pressure) $\tau$, or by the extensive observables: length (volume)
$\mc L$ and energy $U$. This simplifies the problem as only two parameters are
needed to specify the equilibrium thermodynamic state and no phase
transitions will appear.

The microscopic model is given by a chain of $N$ anharmonic
oscillators, where the first particle is attached to a fix point and
on the last particle acts a force (tension) $\tilde\tau$, eventually
changing in time. The action of the thermal bath is modeled by
independent Langevin processes at temperature $T$, acting on each
particle. A mathematically equivalent model for the heat bath is given
by random collisions with the environment: at exponentially distributed
independent random times, each particle has a new velocity distributed
by a centered gaussian with variance $T$. 

As a consequence of the action of the thermal bath, the time
evolution of the microscopic configuration of the positions and
velocities of the particles is stochastic. The distance between the
first and the last particle defines the microscopic length of the
system, while the energy is given by the sum of the kinetic energies
of each particle and the potential energy of each spring. 

For each value of the applied tension $\tau$, the system has an
equilibrium probability distribution explicitely given by a Gibbs
measure, a product measure in this case. The temperature
parameter is fixed by the heat bath. Starting the system
with an equilibrium given by tension $\tau_0$, and changing the applied
tension to $\tau_1$, the system will go out of equilibrium before
reaching the new equilibrium state. During this transformation a
certain amount of energy is exchanged with the thermostats and
mechanical work is done by the force applied. We prove that, under a
proper macroscopic rescaling of space and time, all these (random)
quantities, converge to \textit{deterministic} values predicted by
thermodynamics. 

When the system is out of equilibrium, either for a change in the
tension applied, or by initial conditions, there is an evolution of the
local length (or stretch) on a diffusive macroscopic space-time
scale. This is governed by a diffusion equation that describe the
inhomogeneity of the system during the isothermal
transformation. After an infinite time (in this scale) it reach the
new equilibrium state given by a constant value of the local lenght,
corresponding to the value of the tension $\tau_1$. We have obtained,
in this diffusive time scale, an \emph{irreversible} thermodynamic 
transformation, that satisfies a strict Clausius inequality between work
and change of the free energy. Under a further rescaling of time, that
correspond in a slower change of the applied tension, we obtain a
\emph{reversible quasi-static} transformation that satisfies Clausius
inequality. In fact, for the irreversible transformation we obtain the
following relation between heat and changes of thermodynamic entropy $S$
\begin{equation*}
  Q = T \Delta S - \mathcal D
\end{equation*}
where $\mathcal D$ is a strictly positive dissipation term that has an explicit
expression in terms of the solution of the diffusive equation that
govern macroscopically the transformation (cf. \eqref{eq:32}). In the
quasi-static limit we prove that $\mathcal D \to 0$.
A similar interpretation of quasi-static
transformations, for thermodynamic systems with one parameter (density),
has been proposed in recent works by Bertini et al. \cite{bertini-prl,
  bertini-jsp}. 

In the case of the harmonic chain, the thermodynamic entropy is a
function of the temperature, so it remains constant in isothermal
transformation. Then heat is equal to the dissipation term $\mathcal
D$. It means that in the quasistatic limit for the harmonic chain,
there is no heat produced, internal energy is changed by work in a
perfectly efficient way.  

Thermodynamics does not specify the time scale for the
transformations, this may depend on the nature of the transformation
(isothermal, adiabatic, ...) and the details of the microscopic system
and of the exterior agent (heat bath etc.). In this system of
oscillators, in adiabatic setting, with also momentum conservation,
the relevant space--time scale is hyperbolic (cf. \cite{oe}). 

The proof of the hydrodynamic limit follows the lines of \cite{ov,
  trem}, using the relative entropy method (cf. \cite{yau,
  kipnis}). The method has to be properly adapted to deal with the
boundary conditions.


\section{Isothermal microscopic dynamics}
\label{sec:isoth-micr-dynam}

We consider a chain of $N$ coupled oscillators in one dimension.
Each particle has the same mass that we set equal to 1.
The position of atom $i$ is denoted by $q_i\in \mathbb R$, while its
momentum is denoted by $p_i\in\mathbb R$. 
 Thus the configuration space is $(\mathbb R\times \mathbb R)^N$. 
We assume that an extra particle $0$ to be attached to a fixed point
and does not move, i.e. $(q_0,p_0)\equiv(0,0)$,
 while on particle $N$ we apply a force $\tilde\tau(t)$ depending on time.
Observe that only the particle 0 is constrained to not move, and that
$q_i$ can assume also negative values. 

Denote by  ${\bf q} :=(q_1,\dots,q_N)$ and ${\bf p}
:=(p_1,\dots,p_N)$. The interaction between two particles $i$ and
$i-1$ will be described by the potential energy $V(q_i-q_{i-1})$ of an
anharmonic spring relying the particles. We assume   $V$ to be a
positive smooth function which for large $r$ grows faster than linear
but at most quadratic, that means that there exists a constant $C>0$
such that 
\begin{eqnarray}\label{V1}
&&\lim_{|r|\rightarrow\infty}\frac{V(r)}{|r|}=\infty.
\\\nonumber\\
&&\label{V2}
\limsup_{|r|\rightarrow\infty}V^{\prime\prime}(r)\leq C<\infty.\\\nonumber\\\nonumber
\end{eqnarray}
Energy is defined by the following Hamiltonian:
\begin{eqnarray*}
\mathcal H_N(\bf q ,\bf p ):
&=&\sum_{i=1}^{N}  \left( \frac{p_i^2}2 + V(q_{i}-q_{i-1}) \right).
\end{eqnarray*}
Since we focus on a nearest neighbor interaction, we may define the distance between particles by \index{{\large{CHAPTER 2:}}!$r_i$}
$$
r_i=q_{i}-q_{i-1}, \qquad i=1,\dots,N.
$$ 
The chain is immersed in a thermal bath at temperature $\beta^{-1}$
that we model by the action of $N$ independent Langevin processes.
The dynamics is defined by the solution of the system of stochastic
differential equations
\begin{equation}
  \label{eq:sde}
  \begin{split}
    dr_i &= N^2 (p_i - p_{i-1})\; dt\\
    dp_i &= N^2 (V'(r_{i+1}) - V'(r_i))\; dt - N^2 \gamma p_i \; dt -
    N \sqrt{2\gamma\beta^{-1}} dw_i, \qquad i=1,\dots, N-1,\\
    dp_N &= N^2(\tilde\tau(t) - V'(r_N))\; dt - N^2\gamma p_N \; dt -
    N \sqrt{2\gamma\beta^{-1}} dw_N
  \end{split}
\end{equation}
Here $\{w_i(t)\}_i$ are N-independent standard Wiener processes,
$\gamma>0$ is a parameter of intensity of the interaction with the heat
bath, $p_0$ is set identically to $0$. We have also already rescaled
time according to the diffusive space-time scaling. Notice that
$\tilde\tau(t)$ changes at this macroscopic time scale.

The generator of this diffusion is given by
\begin{equation}\label{noisygen}
\mathcal L_N^{\tilde\tau(t)}:= N^2 A^{\tau(t)}_N + N^2 \gamma S_N.
\end{equation}
Here the Liouville operator  $A^{\tau}_N$ is given by
\begin{eqnarray}\label{liouville}
\nonumber A^{\tau}_N
 &=&\nonumber\sum_{i=1}^{N}(p_{i}-p_{i-1})\frac{\partial}{\partial
   r_i}+\sum_{i=1}^{N-1}\left(V^{\prime}(r_{i+1})-V^{\prime}(r_{i})\right)
 \frac{\partial}{\partial p_i}\\ 
 &&+\left(\tau -V^{\prime}(r_N)\right)\frac{\partial}{\partial p_N},
 \end{eqnarray}
while 
\begin{equation}
  \label{eq:3}
  S = \sum_{i=1}^N \left(\beta^{-1} \partial_{p_i}^2 -
    p_i \partial_{p_i} \right) 
\end{equation}

For $\tilde\tau(t) = \tau$ constant, the system has a unique
stationary measure given by the product
\begin{equation}
  \label{eq:gibbs}
  d\mu^N_{\tau,\beta} = \prod_{i=1}^N e^{-\beta (\mc E_i - \tau
      r_i) - \mc G_{\tau,\beta}}\; dr_i\; dp_i = g^N_\tau  d\mu^N_{0,\beta}
\end{equation}
where we denoted $\mc E_i = p_i^2/2 + V(r_i)$, the energy we attribute
to the particle $i$, and 
\begin{equation}
  \label{eq:pfunct}
  \mc G_{\tau,\beta} = \log \left[\sqrt{2\pi\beta^{-1}}\int e^{-\beta
      (V(r) - \tau r)}\; dr  \right].
\end{equation}
Observe that the function $\mf r(\tau) = \beta^{-1} \partial_\tau \mc
G_{\tau,\beta}$ gives the average equilibrium length in function of
the tension $\tau$, and we denote the inverse by $\bs \tau(\mf r)$.

We will need also to consider local Gibbs measure (inhomogeneous
product), corresponding to profiles of tension $\{\tau(x),
x\in[0,1]\}$:
\begin{equation}
  \label{eq:gibbs}
  d\mu^N_{\tau,\beta} = \prod_{i=1}^N e^{-\beta (\mc E_i - \tau(i/N)
      r_i) - \mc G_{\tau(i/N),\beta}}\; dr_i\; dp_i 
    = g^N_{\tau(\cdot)} d\mu^N_{0,\beta}
\end{equation}

Given an initial profile of tension $\tau(0,x)$,
we assume that initial probability state is given by an absolutely
continuous measure (with respect to the Lebesgue measure),
 whose density with respect to $d\mu^N_{0,\beta}$ is given by
 $f^N_{0}$, such that the relative entropy 
 with respect to $\mu^N_{\tau(0,x),\beta}$
 \begin{equation}
   \label{eq:1}
   H_N(0) = \int f_{0}^N \log
   \left(\frac{f_{0}^N}{g^N_{\tau(0,\cdot)}}\right)  d\mu^N_{0,\beta}
 \end{equation}
satisfies
\begin{equation}
  \label{eq:2}
  \lim_{N\to\infty} \frac {H_N(0)}N = 0
\end{equation}
This implies the following convergence in probability with respect to $f_0^N$: 
\begin{equation}
  \label{eq:4}
  \frac 1N \sum_{i=1}^N G(i/N) r_i(0) \longrightarrow \int_0^1 G(x)
  \mf r(\tau(0,x))\; dx
\end{equation}

The macroscopic evolution for the stress will be given by
\begin{equation}
  \label{eq:diff}
  \begin{split}
    \partial_t r(t,x) &= \gamma^{-1}\partial_x^2 \bs\tau(r(x,t)), \quad
    x\in[0,1]\\
    \partial_x r (t,0) &= 0, \quad \bs\tau(r(t,1)) = \tilde\tau(t),
    \quad t>0\\
    \bs\tau(r(0,x)) &= \tau(0,x), \quad x\in[0,1]
  \end{split}
\end{equation}
Observe that we do not require that $\tau(r(0,1)) = \tilde\tau(0)$, so
we can consider initial profiles of equilibrium with tension different
than the applied $\tilde\tau$. 

The main result is the following
\begin{theorem}\label{main}
  \begin{equation}
    \label{eq:7}
    \lim_{N\to\infty} \frac {H_N(t)}N = 0
  \end{equation}
where
\begin{equation}
  \label{eq:5}
    H_N(t) = \int f_{t}^N \log
   \left(\frac{f_{t}^N}{g^N_{\tau(t,\cdot)}}\right)\; d\mu^N_{0,\beta}
\end{equation}
with $\tau(t,x) = \bs \tau(r(t,x))$, and $f^N_t$ the density of the
configuration of the system at time $t$.
\end{theorem}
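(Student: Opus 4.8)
The plan is to run the relative-entropy (Yau) method, taking the local Gibbs state $g_t^N = g^N_{\tau(t,\cdot)}$ as the moving reference and showing that the extra entropy produced by the mismatch between the true law $f_t^N$ and $g_t^N$ is controlled by $H_N(t)$ itself. First I would differentiate \eqref{eq:5} in time. Writing $\psi_t^N = g^N_{\tau(t,\cdot)}$, using the forward equation $\partial_t f_t^N = (\mathcal{L}_N^{\tilde\tau(t)})^* f_t^N$ (adjoint in $L^2(\mu^N_{0,\beta})$) and the conservation $\int f_t^N\, d\mu^N_{0,\beta}=1$, one obtains the standard bound
\[\frac{d}{dt} H_N(t) \le \int \Big( \frac{(\mathcal{L}_N^{\tilde\tau(t)})^* \psi_t^N}{\psi_t^N} - \partial_t \log\psi_t^N\Big) f_t^N \, d\mu^N_{0,\beta},\]
the non-positive Dirichlet-form contribution from the symmetric part $N^2\gamma S_N$ being discarded. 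Setting $V_t^N := (\psi_t^N)^{-1}(\mathcal{L}_N^{\tilde\tau(t)})^* \psi_t^N - \partial_t\log\psi_t^N$, the entire proof reduces to the estimate $\tfrac1N \int V_t^N f_t^N\, d\mu^N_{0,\beta} \le C\,\tfrac{H_N(t)}N + o(1)$.

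The second step is to compute $V_t^N$ explicitly. Since $\psi_t^N$ is an inhomogeneous product of single-site factors, $\partial_t\log\psi_t^N = \sum_i \beta\,\partial_t\tau(t,i/N)\,\big(r_i - \mf r(\tau(t,i/N))\big)$, while applying the adjoint generator yields discrete gradients of the microscopic currents, the current of $r_i$ being $p_i-p_{i-1}$ and that of $p_i$ being $V'(r_{i+1})-V'(r_i)$. Because $\tau(t,\cdot)$ is smooth in $x$ (by the a priori regularity of the solution of \eqref{eq:diff}), a spatial Taylor expansion converts the $N^2$-scaled discrete gradients into the second-order operator $\gamma^{-1}\partial_x^2$ acting on $\bs\tau(r)$ to leading order, the hypotheses \eqref{V1}--\eqref{V2} guaranteeing the finiteness of the exponential moments used throughout.

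The third, decisive step is the cancellation. I would replace the local functions appearing in $V_t^N$ (essentially $V'(r_i)$ and $r_i$) by their local-equilibrium averages $\bs\tau(\mf r)$ and $\mf r(\tau)$ through a one-block/local-equilibrium estimate, the error being absorbed into $H_N(t)$ via the entropy inequality together with the exponential-moment bounds (equivalence of ensembles). After this replacement the surviving macroscopic contributions assemble into a sum over $i$ of terms proportional to $\big(\gamma^{-1}\partial_x^2\bs\tau(r) - \partial_t r\big)(t,i/N)$, which vanish identically \emph{precisely because} $\tau(t,x)=\bs\tau(r(t,x))$ with $r$ solving the hydrodynamic equation \eqref{eq:diff}; this forced cancellation is exactly what dictates the form of the limiting PDE.

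The fourth step, and the point I expect to be the main obstacle, is the boundary. The discrete summation by parts in step two generates terms supported on $O(1)$ sites near $i=1$ (where $p_0\equiv 0$) and $i=N$ (where the external tension $\tilde\tau(t)$ enters $dp_N$), yet carrying the factor $N^2$, so naively of the wrong order. The Neumann condition $\partial_x r(t,0)=0$ and the boundary condition $\bs\tau(r(t,1))=\tilde\tau(t)$ in \eqref{eq:diff} are exactly what makes the leading boundary contributions cancel; controlling the residual boundary errors to order $o(N)$ uniformly in $N$ and $t$, and accommodating the admitted mismatch $\bs\tau(r(0,1))\ne\tilde\tau(0)$ at the initial time, requires localized entropy and energy bounds near the endpoints and the dissipativity of the Langevin part to absorb them. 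Combining steps one through four yields $\tfrac{d}{dt}\tfrac{H_N(t)}N \le C\,\tfrac{H_N(t)}N + o(1)$ with $C$ depending only on the solution of \eqref{eq:diff}, and Gronwall's inequality together with the hypothesis $H_N(0)/N\to 0$ in \eqref{eq:2} then gives \eqref{eq:7}.
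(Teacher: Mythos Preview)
Your scheme has a genuine gap at step two, and it propagates into step three. With the \emph{unmodified} reference $\psi_t^N=g^N_{\tau(t,\cdot)}$ (which, as a density with respect to $\mu^N_{0,\beta}$, depends only on the $r_i$'s), the calculation of $(\mathcal L_N^{\tilde\tau(t)})^*\psi_t^N/\psi_t^N$ gives only
\[
-\,\beta N^2\sum_i\tau(t,i/N)\,(p_i-p_{i-1})\;+\;\beta N^2\tilde\tau(t)\,p_N,
\]
and after summation by parts the boundary piece cancels by $\bs\tau(r(t,1))=\tilde\tau(t)$ but the bulk leaves
\[
\beta N\sum_{i}\partial_x\tau(t,i/N)\,p_i\;+\;O(N).
\]
No $V'(r_i)$ ever appears here, so your ``replacement of $V'(r_i)$ by $\bs\tau$'' in step three is vacuous, and there is nothing to assemble into $\gamma^{-1}\partial_x^2\bs\tau-\partial_t r$. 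Worse, the surviving momentum sum is not controllable by the entropy inequality you invoke: for a single site the log--moment generating function under $g^N_{\tau(t,\cdot)}$ is $\alpha^2N^2(\partial_x\tau)^2/(2\beta)$, so optimizing over $\alpha$ still yields a bound of order $N$, not $o(N)$. Since you explicitly discard the Dirichlet form from $N^2\gamma S_N$, you have no mechanism left to absorb this term.

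The paper closes exactly this gap by working not with $g^N_{\tau(t,\cdot)}$ but with the \emph{corrected} reference
\[
\tilde g^N_{\tau(t,\cdot)}\;=\;Z_{N,t}^{-1}\,\exp\!\Big(\tfrac{\beta}{\gamma N}\sum_i\partial_x\tau(t,i/N)\,p_i\Big)\,g^N_{\tau(t,\cdot)},
\]
i.e.\ a local Gibbs state tilted to have the small nonzero momentum profile $\langle p_i\rangle\approx(\gamma N)^{-1}\partial_x\tau(t,i/N)$. Applying $N^2\gamma S_N$ to the tilt produces $-\beta N\sum_i\partial_x\tau\,p_i$, which \emph{cancels} the dangerous term above; applying the Hamiltonian part to the tilt produces, after a further summation by parts, the missing $-\beta\gamma^{-1}\sum_i\partial_x^2\tau(t,i/N)\,V'(r_i)$. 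Only then do the pieces combine with $\partial_t\log\tilde g^N$ into $\sum_i\big(\gamma^{-1}\partial_x^2\tau\,V'(r_i)+\partial_t\tau\,(r_i-\mf r)\big)$, to which the one--block/entropy argument of your step three legitimately applies. Since $\tilde H_N-H_N=o(N)$, the conclusion for $H_N$ follows. In short: your outline is the right relative-entropy skeleton, but the first-order momentum correction to the reference measure is the essential idea that makes the computation close, and it is absent from your proposal.
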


A schetch of the proof is postponed to section
\ref{sec:proof-hydr-limit}. 

\begin{remark}
  The proof and the result are identical (up to some constant) if we
  use a different modelling of 
  the heat bath, where the particles undergo independent random
  collisions such that 
  after the collision they get a new value distributed by a gaussian
  distribution with variance $\beta^{-1}$, i.e.
  \begin{equation}
    \label{eq:35}
    Sf(\br,\bp) =  \sum_{i=1}^N \int \left(f(\br,p_1, \dots, p_i',\dots) -
    f(\br, \bp)\right) \frac{e^{-\beta (p_i')^2/2}}{\sqrt{2\pi\beta^{-1}}} dp_i'
  \end{equation}
\end{remark}

\section{Thermodynamic consequences}
\label{sec:therm-cons}

Consider the case where we start our system with a constant tension 
$\tau(0,x) = \tau_0$ and we apply a tension $\tilde\tau(t)$ going
smoothly from $\tilde\tau(0) = \tau_0$ to $\tilde\tau(t) = \tau_1$ for
$t\ge t_1$. 
It follows from standard arguments that
\begin{equation}
  \label{eq:6}
  \lim_{t\to\infty} \bs\tau(r(t,x)) \ =\ \tau_1, \qquad \forall x\in[0,1]
\end{equation}
so on an opportune time scale, this evolution represents an isothermal
thermodynamic transformation from the equilibrium state
$(\tau_0,\beta^{-1})$ to $(\tau_1,\beta^{-1})$. Clearly this is an
irreversible transformation and will statisfy a strict Clausius
inequality.   

The length of the system at time $t$ is given by
\begin{equation}
  \label{eq:8}
  L(t) = \int_0^1 r(t,x) \; dx
\end{equation}
and the work done by the force $\tilde\tau$:
\begin{equation}
  \label{eq:9}
  \begin{split}
    W(t) = \int_0^t \tilde\tau(s) dL(s) = \gamma^{-1} \int_0^t ds\;
    \tilde\tau(s)\; \int_0^1 dx\; \partial_x^2 \bs\tau(r(s,x)) \\
    = \gamma^{-1} \int_0^t \tilde\tau(s) \partial_x \bs\tau(r(s,1)) ds
  \end{split}
\end{equation}
The free energy of the equilibrium state $(r,\beta)$ is given by the
Legendre transform of $\beta^{-1}\mc G_{\tau,\beta}$:
\begin{equation}
  \label{eq:freeen}
   F(r,\beta) = \inf_\tau \left\{ \tau r - \beta^{-1}\mc
    G_{\tau,\beta}\right\} 
\end{equation}
Since $\beta$ is constant, we will drop the dependences on it in the
following. It follows that $\bs\tau(r) = \partial_r \mc F$.
Thanks to the local equilibrium, we can define the free energy at time
$t$ as
\begin{equation}
  \label{eq:11}
  \mc F(t) = \int_0^1 F(r(t,x),\beta) \; dx .
\end{equation}
Its time derivative is (after integration by parts):
\begin{equation*}
  \frac d{dt}  \mathcal F (t) = -\gamma^{-1}\int_0^1 \left(\partial_x
    \bs\tau(r(t,x))\right)^2 \; dx + \gamma^{-1}\tilde\tau(t) \partial_x
  \bs\tau(r(t,x))\big|_{x=1} 
\end{equation*}
i.e.
\begin{equation*}
   \mathcal F (t) - \mathcal F (0) =
     W(t) -
    \gamma^{-1} \int_0^t ds \int_0^1 \left(\partial_{x} \bs\tau(r(s,x))\right)^2
    \; dx  
\end{equation*}
Because or initial condition, $\mathcal F (0) = F(\tau_0)$, and
because \eqref{eq:6} we have $\mathcal F (t) \to F(\tau_1)$, and we
conclude that
\begin{equation}
  \label{eq:10}
   F(\tau_1) -  F(\tau_0) = W -  \gamma^{-1}\int_0^{+\infty} ds \int_0^1
   \left(\partial_{x} \bs\tau(r(s,x))\right)^2 \; dx  
\end{equation}
where $W$ is the total work done by the force $\tilde\tau$ in the
transformation up to reaching the new equilibrium and is expressed by
taking the limit in \eqref{eq:9} for $t\to\infty$:
\begin{equation}
  \label{eq:15}
   W = \int_0^\infty \tilde\tau(s) dL(s) =  \gamma^{-1}\int_0^\infty
  \tilde\tau(s)  \partial_x \bs\tau(r(s,1)) ds  
\end{equation}
By the same argument we will use in the proof of Proposition \ref{quasistatic}
we have that the second term of the righthand side of \eqref{eq:10} is
finite, that implies the existence of $W$. 

 Since the second
term on right hand side is always strictly positive, we have obtained
a strict Clausius inequality. This is not surprizing since we are
operating an irreversible transformation.  

If we want to obtain a \emph{reversible quasistatic isothermal
transformation}, we have introduce another
larger time scale, i.e. introduce a small parameter $\ve >0$ and
apply a tension slowly varying in time $\tilde \tau(\ve t)$
.
The diffusive equation becomes
\begin{equation}
  \label{eq:41slow}
  \partial_{t} r_\ve(t,x) = \gamma^{-1}\partial_x^2 \bs\tau(r_\ve(t,x))
\end{equation}
 with boundary conditions
\begin{equation}
  \label{eq:48slow}
  \begin{split}
    &\partial_x r_\ve(t,0) = 0\\
    &\bs\tau (r_\ve(t,1)) = \tilde\tau(\ve t)
  \end{split}
\end{equation}

Then \eqref{eq:10} became
\begin{equation}
  \label{eq:43}
  F(r_1) - F(r_0) = W_\ve - \gamma^{-1}\int_0^\infty ds 
  \int_0^1 \left(\partial_{x} \bs\tau(r_\ve(s,x))\right)^2 \; dx  
\end{equation}
\begin{proposition}\label{quasistatic}
  \begin{equation}
    \label{eq:12}
    \lim_{\ve\to 0} \int_0^\infty ds 
  \int_0^1 \left(\partial_{x} \bs\tau(r_\ve(s,x))\right)^2 \; dx  = 0
  \end{equation}
\end{proposition}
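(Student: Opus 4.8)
The plan is to pass to the slow time scale $u=\ve t$, where the driving becomes order one, and to show that the rescaled dissipation is of order $\ve^2$. Setting $\rho_\ve(u,x)=r_\ve(u/\ve,x)$ and $\sigma_\ve=\bs\tau(\rho_\ve)$, equations \eqref{eq:41slow}--\eqref{eq:48slow} become
\begin{equation*}
  \ve\gamma\,\partial_u\rho_\ve = \partial_x^2\sigma_\ve,\qquad
  \partial_x\sigma_\ve(u,0)=0,\quad \sigma_\ve(u,1)=\tilde\tau(u),\quad
  \sigma_\ve(0,\cdot)=\tau_0 .
\end{equation*}
Since $ds=\ve^{-1}du$, the quantity in \eqref{eq:12}, which I denote $\mathcal D_\ve$, equals $\ve^{-1}D_\ve$ with $D_\ve:=\int_0^\infty\!\!\int_0^1(\partial_x\sigma_\ve)^2\,dx\,du$, so it suffices to prove $D_\ve=o(\ve)$; heuristically $\partial_x\sigma_\ve=O(\ve)$ because $\partial_x^2\sigma_\ve=\ve\gamma\,\partial_u\rho_\ve$ is small once $\partial_u\rho_\ve$ is controlled.

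First I would record the a priori facts. By the maximum principle $\sigma_\ve$ stays in the compact interval bounded by the range of $\tau_0$ and $\tilde\tau$, hence $\rho_\ve$ ranges in a compact set on which $\bs\tau'\ge c_0>0$ and the length $L_\ve(u)=\int_0^1\rho_\ve(u,x)\,dx$ is uniformly bounded; moreover, for $u$ beyond the time at which $\tilde\tau$ becomes constant the solution relaxes exponentially to the flat equilibrium $\mf r(\tau_1)$, which guarantees that all the infinite-time integrals below converge and that the time-boundary terms at $u=\infty$ vanish. Testing the equation against $v_\ve:=\sigma_\ve-\tilde\tau(u)$ and using $v_\ve(u,1)=0$, $\partial_x\sigma_\ve(u,0)=0$ gives, after one integration by parts,
\begin{equation*}
  \mathcal D_\ve = -\gamma\int_0^\infty\!\!\int_0^1 \partial_u\rho_\ve\,v_\ve\,dx\,du,
\end{equation*}
while the Poincaré inequality for functions vanishing at $x=1$ yields $\int_0^\infty\!\!\int_0^1 v_\ve^2 \le D_\ve=\ve\,\mathcal D_\ve$.

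The core estimate is a uniform bound on $\partial_u\rho_\ve$. Testing the equation against $\partial_u\sigma_\ve=\bs\tau'(\rho_\ve)\partial_u\rho_\ve$ and integrating by parts in $x$ produces
\begin{equation*}
  \ve\gamma\int_0^1\bs\tau'(\rho_\ve)(\partial_u\rho_\ve)^2\,dx
  = \partial_x\sigma_\ve(u,1)\,\tilde\tau'(u)-\tfrac12\frac{d}{du}\int_0^1(\partial_x\sigma_\ve)^2\,dx .
\end{equation*}
Integrating in $u$, the total-derivative term drops because the profile is flat at $u=0$ and at $u=\infty$; using $\partial_x\sigma_\ve(u,1)=\int_0^1\partial_x^2\sigma_\ve\,dx=\ve\gamma\,\frac{d}{du}L_\ve(u)$ and then integrating by parts in $u$ (where $\tilde\tau'$ has compact support and $L_\ve$ is bounded) gives
\begin{equation*}
  \int_0^\infty\!\!\int_0^1 \bs\tau'(\rho_\ve)(\partial_u\rho_\ve)^2\,dx\,du
  = \int_0^\infty \frac{dL_\ve}{du}\,\tilde\tau'(u)\,du \le C
\end{equation*}
uniformly in $\ve$, whence $\int_0^\infty\!\!\int_0^1(\partial_u\rho_\ve)^2\le C/c_0$.

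Combining the three displays by Cauchy--Schwarz,
\begin{equation*}
  \mathcal D_\ve \le \gamma\Big(\int_0^\infty\!\!\int_0^1(\partial_u\rho_\ve)^2\Big)^{1/2}\Big(\int_0^\infty\!\!\int_0^1 v_\ve^2\Big)^{1/2}
  \le \gamma\,(C/c_0)^{1/2}\,(\ve\,\mathcal D_\ve)^{1/2},
\end{equation*}
so that $\mathcal D_\ve\le \gamma^2(C/c_0)\,\ve\to 0$, which is exactly \eqref{eq:12}. The main obstacle is the uniform-in-$\ve$ control of $\partial_u\rho_\ve$ together with the regularity needed to justify the two integrations by parts and the vanishing of the infinite-time boundary terms; these rest on the maximum-principle bounds (which ensure uniform parabolicity through $\bs\tau'\ge c_0$) and on the exponential relaxation after the tension stops varying. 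The decisive point is that rewriting the boundary trace $\partial_x\sigma_\ve(u,1)$ as $\ve\gamma\,dL_\ve/du$ converts it into a bulk quantity, making the right-hand side order $\ve$ and thereby closing the self-improving estimate.
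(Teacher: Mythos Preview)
Your argument is correct and reaches the same rate $\mathcal D_\ve=O(\ve)$, but it proceeds differently from the paper's proof. The paper also passes to the slow time scale, but then uses a single Lyapunov functional $\tfrac12\int_0^1(\tilde r_\ve-\mf r[\tilde\tau(\ft)])^2\,dx$: differentiating it and integrating by parts produces the good term $-\ve^{-1}\int_0^1\bs\tau'(\tilde r_\ve)(\partial_x\tilde r_\ve)^2$ plus an error $\tilde\tau'(\ft)\int_0^1(\tilde r_\ve-\tilde r_\ve|_{x=1})$ coming from the moving boundary value; a Young inequality with a free parameter absorbs half of the error into the dissipation and yields directly $\ve^{-1}\int_0^\infty\int_0^1(\partial_x\tilde r_\ve)^2\le C\ve$. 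Your route instead uses two test functions and a bootstrap: testing against $\partial_u\sigma_\ve$ together with the identity $\partial_x\sigma_\ve(u,1)=\ve\gamma\,dL_\ve/du$ gives the stronger intermediate estimate $\int_0^\infty\int_0^1(\partial_u\rho_\ve)^2\le C$ uniformly in $\ve$, and then Cauchy--Schwarz plus Poincar\'e closes the self-improving inequality $\mathcal D_\ve\le C(\ve\mathcal D_\ve)^{1/2}$. The paper's argument is shorter and demands slightly less regularity (no $\partial_u\partial_x\sigma_\ve$ is needed), while yours extracts a uniform time-derivative bound that is of independent interest---for instance it immediately gives the quasistatic convergence \eqref{eq:14} via $\rho_\ve(u,x)-\mf r[\tilde\tau(u)]=-\int_x^1\partial_y\rho_\ve$ and the $L^2$ control of $\partial_x\sigma_\ve$.
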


\begin{proof}
To simplify notations, let set here $\gamma=1$. 
We look at the time scale $\ft = \ve^{-1} t$, then 
$\tilde r_\ve(\ft,x) = r_\ve(\ve^{-1} t,x)$ statisfy the equation
\begin{equation}
  \label{eq:41fast}
  \partial_{\ft} \tilde r_\ve(\ft,x) = \ve^{-1}\partial_x^2
  \bs\tau(\tilde r_\ve(\ft,x))
\end{equation}
 with boundary conditions
\begin{equation}
  \label{eq:48fast}
  \begin{split}
    &\partial_x r_\ve(\ft,0) = 0\\
    &\bs\tau (r_\ve(\ft,1)) = \tilde\tau(\ft)
  \end{split}
\end{equation}

  \begin{equation}\label{eq:lyap}
    \begin{split}
      \frac12\int_0^1 &\left(\tilde r_\ve(\ft,x) - \mf r[\tilde
        \tau(\ft)]\right)^2   d\ft  \\
      =& \int_0^\ft ds \int_0^1 dx \;
      \left(\tilde r_\ve(s,x) - \mf r(\tilde \tau(s))\right)
      \left(\ve^{-1}\partial_x^2 \bs\tau[\tilde r_\ve(s,x)] - 
        \frac d{ds} \mf r[\tilde \tau(s)] \right)\\
      =& - \ve^{-1} \int_0^\ft ds \int_0^1 dx \;\left(\partial_x
        \tilde r_\ve(s,x)\right)^2 \frac {d\bs\tau}{dr}
      \left[\tilde r_\ve(s,x)\right]  \\
      &- \int_0^t ds \frac {d\mf r}{d\tau}(\tilde \tau(s))
      \tilde\tau'(s) \int_0^1 dx\; \left(\tilde r_\ve(s,x) - 
          \tilde r_\ve(s,1)\right) 
    \end{split}
  \end{equation}

Rewriting
\begin{equation*}
  \begin{split}
    \left|\int_0^1 dx\; \left(\tilde r_\ve(s,x) -\tilde r_\ve(s,1)\right)\right| =
    \left|\int_0^1 dx\; \int_x^1 dy\; \partial_y \tilde r_\ve(s,y)\right| \\
    = \left|\int_0^1 dy\; y \partial_y \tilde r_\ve(s,y) \right| 
    \le \frac{\alpha}{2\ve} \int_0^1 dx \;\left(\partial_x
       \tilde r_\ve(s,x)\right)^2  + \frac \ve{4\alpha}
  \end{split}
\end{equation*}
Recall that the free energy $F$ is strictly convex and that 
$0< C_- \le\frac {d\mf r}{d\tau} \le C_+ < + \infty$, and furthermore
we have chosen $\tilde\tau$ such that $\left|\tilde\tau'(t)\right|\le
1_{t\le t_1}$. 
Regrouping positive terms on the left hand side we obtain the bound:
\begin{equation}
  \label{eq:13}
  \begin{split}
    \frac12\int_0^1 \left(\tilde r_\ve(\ft,x) - \mf r[\tilde
      \tau(\ft)]\right)^2  dx + \ve^{-1} \left(C_- - \frac{C_+ \alpha \ft}2\right)
     \int_0^\ft ds \int_0^1 dx \;\left(\partial_x \tilde r_\ve(s,x)\right)^2  
      \le \frac {\ve C_+ \ft}{4\alpha} 
  \end{split}
\end{equation}
By choosing $\alpha = \frac{C_-}{C_+ \ft} $, we obtain, for any $\ft> t_1$:
\begin{equation}
  \label{eq:16}
 \frac1{C_-}\int_0^1 \left(\tilde r_\ve(\ft,x) - \mf r[\tilde
      \tau(t_1)]\right)^2  dx + \ve^{-1} 
     \int_0^\ft ds \int_0^1 dx \;\left(\partial_x \tilde r_\ve(s,x)\right)^2  
      \le \frac {\ve}{2} 
\end{equation}
then we can take the limit as $\ft\to\infty$, the first term on the
right hand side of \eqref{eq:16} will disappear, and we obtain
\begin{equation}
  \label{eq:17}
  \ve^{-1} \int_0^{+\infty} ds \int_0^1 dx \;\left(\partial_x \tilde
    r_\ve(s,x)\right)^2 \le \frac {\ve}{2} 
\end{equation}
that implies \eqref{eq:12}.
\end{proof}
Consequently we obtain the Clausius identity for the quasistatic
reversible isothermal transformation.  

Along the lines of the proof above it is also easy to prove that
\begin{equation}
  \label{eq:14}
  \lim_{\ve\to 0} \int_0^1 \left(r_\ve(t,x) - \mf r[\tilde \tau(\ve
        t)]\right)^2   dx = 0
\end{equation}
that gives a rigorous meaning to the \emph{quasistatic} definition.

The internal energy of the thermodynamic equilibrium state $(r,T)$ is
defined as $U = F + TS$, where $S$ is the thermodynamic entropy.
The first principle of thermodynamics defines the heat $Q$ transferred
as $\Delta U = W + Q$.

The change of internal energy in the isothermal tranformation is
given by
\begin{equation}
  \label{eq:18}
  \Delta U = \Delta F + T \Delta S = W - \gamma^{-1} \int_0^{+\infty}
  ds \int_0^1 
  dx \;\left(\partial_x \bs\tau(r(s,x))\right)^2 + T \Delta S
\end{equation}
Then for the irreversible transformation we have 
$Q\le T\Delta S$, while equality holds in the quasistatic limit. 

The linear case is special, it corresponds to the microscopic
harmonic interaction. In this case $S$ is just a function of the
temperature ($S \sim \log T$), so $\Delta S = 0$ for any isothermal
transformation. Correspondingly the heat exchanged with the thermostat
is always negative and given by $Q =- \gamma^{-1} \int_0^{+\infty} ds \int_0^1
  dx \;\left(\partial_x r(s,x)\right)^2$, and null in the quasistatic
  limit.  

\section{Work and Microscopic Heat}
\label{sec:microscopic-heat}

The microscopic total lenght is defined by $q_N = \sum_i r_i$, the
position of the last particle. To connect it to the macroscopic space
scale we have to divide it by $N$, so se define
\begin{equation}
  \label{eq:22}
  \mc L_N(t) = \frac {q_N(t)}N = \frac 1N \sum_{i=1}^N r_i(t) .
\end{equation}
 The time evolution in the scale considered is given by
 \begin{equation}
   \label{eq:23}
   \mc L_N(t) - \mc L_N(0) = \int_0^t Np_N(s) \; ds. 
 \end{equation}
If we start with the equilibrium distribution with length $r_0$, the
law of large numbers guarantees that
\begin{equation}
  \label{eq:24}
  \mc L_N(0) \mathop{\longrightarrow}_{N\to\infty} r_0,
\end{equation}
in probability.

By theorem  \ref{main}, we also have the convergence at time t:
\begin{equation}
  \label{eq:20}
   \mc L_N(t) \mathop{\longrightarrow}_{N\to\infty} L(t)
   \mathop{\longrightarrow}_{t\to\infty} r_1 = \frak r(\tau_1), 
\end{equation}
where $ L(t) $ is defined by \eqref{eq:8}. 
Notice that in \eqref{eq:23} while $Np_N(s)$ fluctuates wildly as
$N\to\infty$,  its
time integral is perfectly convergent and in fact converges to a
deterministic quantity.

The microscopic work done up to time $t$ by the force $\tilde\tau$ is given by
\begin{equation}
  \label{eq:25}
  \mc W_N(t) = \int_0^t \tilde\tau(s) d\mc L_N(s) 
  = \int_0^t \tilde\tau(s) N p_N(s) ds  
\end{equation}
We adopt here the convention that positive work means energy increases
in the system. Notice that $\mc W_N(t)$ defines the actual microscopic
work divided by $N$. 

It is a standard exercise to show that, since $\tilde\tau(t)$ and
$L(t)$ are smooth functions of $t$, by \eqref{eq:20} it follows that 
\begin{equation}
  \label{eq:26}
  \mc W_N(t) \mathop{\longrightarrow}_{N\to\infty} W(t) = \int_0^t
  \tilde\tau(s) dL(s)  
\end{equation}
given by $\eqref{eq:9}$.

Microscopically the energy of the system is defined by 
\begin{equation}
  \label{eq:28}
  E_N  = \frac 1N\sum_i \mc E_i
\end{equation}
Energy evolves in time as
\begin{equation}
  \label{eq:21}
  \begin{split}
    E_N(t) - E_N(0) &= \mc W_N(t) + \mc Q_N(t)\\
   \mc Q_N(t) = - \gamma \int_0^t N\sum_{i=1}^N &
    \left(p_i^2(s) - T\right) \; ds  
   + \sqrt{2\gamma \beta^{-1}} \sum_{i=1}^N \int_0^t p_i(s) dw_i(s)
  \end{split}
\end{equation}
where $\mc Q_N$ is the energy exchanged
with the heat bath, what we call \emph{heat}. 

The law of large numbers for the initial distribution gives
\begin{equation*}
  E_N (0)\ \mathop{\longrightarrow}_{N\to\infty}\
    U(\beta, \tau_0) 
\end{equation*}
in probability. 
By the hydrodynamic limit, we expect that
\begin{equation}
  \label{eq:27}
  \begin{split}
    E_N (t)\ \mathop{\longrightarrow}_{N\to\infty}\
    \int_0^1 U(\beta, \bs\tau(r(t,x)))\; dx
    \ \mathop{\longrightarrow}_{t\to\infty}\  U(\beta, \tau_1).
  \end{split}
\end{equation} 
This is not a consequence of
Theorem \ref{main}, because the relative entropy does not control the
convergence of the energy. In the harmonic case it can be proven by
using similar argument as in \cite{simon} (in fact in this case
$f_N(t)$ is a gaussian distribution where we have control of any moments).

Assuming \eqref{eq:27}, we have that $Q_N(t)$ converges, as
$N\to\infty$, to the deterministic 
\begin{equation}
  \label{eq:29}
  Q(t) =  \int_0^1 \left[U(\beta, \bs\tau(r(t,x))) -  U(\beta,
    \tau_0)\right]\;  dx -  W(t) 
\end{equation}
and as $t\to\infty$:
\begin{equation}
  \label{eq:30}
  Q = U(\beta, \tau_1) - U(\beta, \tau_0) - W, \qquad 
  \text{(first principle)}.
\end{equation}

Recalling that the free energy is equal to $F = U - \beta^{-1} S$,
then we can compute the variation of the entropy $S$ as
\begin{equation}
  \label{eq:31}
  \beta^{-1}(S_1 - S_0 ) = -(F_1 - F_0) + W + Q 
\end{equation}
or also that
\begin{equation}
  \label{eq:32}
  Q = \beta^{-1}(S_1 - S_0 ) - \gamma^{-1} \int_0^\infty dt \int_0^1
  dx \left(\partial_x\bs\tau(r(t,x))\right)^2
\end{equation}

In the quasi static limit, we have seen that $ F_1 - F_0 = W$, and
consequently $\beta Q = S_1 - S_0$, in accord to what thermodynamics
prescribe for quasistatic transformations. 

\begin{remark}
  Assume that the distribution of $p_i(t)$ is best approximated by
$$
e^{\frac{\beta}{N\gamma}\sum_i \partial_x\tau(t, i/N) p_i} g^N_{\tau(t,\cdot)}
\prod_{i=1}^N \; dr_i\; dp_i
$$ 
properly normalized. Then the average of $p_i$ is
 $\frac 1{N\gamma} \partial_x\tau(t, i/N)$, 
and \eqref{eq:21} can be rewritten as
\begin{equation*}
  \begin{split}
    N\gamma \sum_{i=1}^N \left(\left(p_i(t)- \frac
        1{N\gamma} \partial_x\tau(t, i/N)\right)^2 - \beta^{-1}\right)&\\
    - \frac 1{N\gamma} \sum_{i=1}^N \partial_x\tau(t, i/N)^2 &+
    2 \sum_{i=1}^N \partial_x\tau(t, i/N) p_i(t)
  \end{split}
\end{equation*}
Taking expectation, the first term is null (as well as the martingale
not written here) while the last two terms converge to
$\gamma^{-1}\int_0^1 (\partial_x\tau(t,x))^2 dx$. This is correct only
in the harmonic case, i.e. the fluctuation inside
the time integral are very important in order to get the changes in
entropy $S$.
\end{remark}

\section{Proof of the hydrodynamic limit}
\label{sec:proof-hydr-limit}

Define the modified local Gibbs density
\begin{equation}
  \label{eq:19}
 \tilde g^N_{\tau(t,\cdot)} = e^{\frac{\beta}{\gamma N}
   \sum_i \partial_x\tau(t, i/N) p_i} g^N_{\tau(t,\cdot)} Z_{N,t}^{-1} 
\end{equation}
where $Z_{N,t}$ is a normalization factor. Then define the
corresponding relative entropy 
\begin{equation}
  \label{eq:33}
  \tilde H_N(t) = \int f_{t}^N \log
   \left(\frac{f_{t}^N}{\tilde g^N_{\tau(t,\cdot)}}\right)\; d\mu^N_{0,\beta}
\end{equation}
It is easy to see that $\lim_{N\to\infty} N^{-1} \left(\tilde H_N(t)-
  H_N(t)\right) = 0$.

Computing the time derivative
\begin{equation}
  \label{eq:34}
\frac d{dt}  \tilde H_N(t) = \int f_{t}^N \mathcal
L_N^{\tilde\tau(t)} \log f_{t}^N \; d\mu^N_{0,\beta} - \int   f_{t}^N
\left( \mathcal L_N^{\tilde\tau(t)} + \partial_t\right) 
\log \tilde g^N_{\tau(t,\cdot)} \; d\mu^N_{0,\beta}
\end{equation}
Using the inequality
\begin{equation*}
  f_{t}^N \mathcal L_N^{\tilde\tau(t)} \log f_{t}^N \le 
\mathcal L_N^{\tilde\tau(t)} f_{t}^N
\end{equation*}
and since $d\mu^N_{0,\beta}$ is stationary for $\mathcal L_N^{0}$, we have
\begin{equation*}
  \int f_{t}^N \mathcal
L_N^{\tilde\tau(t)} \log f_{t}^N \; d\mu^N_{0,\beta}  \le N^2 \tau
\int \partial_{p_N} f_{t}^N \; d\mu^N_{0,\beta} =
N^2 \tau \beta\int p_N f_{t}^N \; d\mu^N_{0,\beta} 
\end{equation*}

By explicit calculation
\begin{equation*}
  \begin{split}
    \mathcal L_N^{\tilde\tau(t)} \log \tilde g^N_{\tau(t,\cdot)} =
    -\beta N^2 \sum_i \tau(i/N, t) (p_i - p_{i-1})
    \\
    + \beta\gamma^{-1} N \sum_i \partial_x\tau(t, i/N)
    \left(V'(r_{i+1}) - V'(r_i) \right) - N \beta \sum_i \partial_x\tau(t, i/N)
    p_i \\
    = N^2 \tau \beta p_N  - \beta\gamma^{-1}
    \sum_i \partial^2_x\tau(t, i/N) V'(r_i) + o(N) 
  \end{split}
\end{equation*}
and
\begin{equation*}
  \partial_t \log \tilde g^N_{\tau(t,\cdot)} = -\beta\sum_i \partial_t
  \tau(t, i/N) (r_i - r(t,i/N)) + O(1). 
\end{equation*}
Then we can estimate
\begin{equation*}
  \begin{split}
    \frac d{dt} \tilde H_N(t) \le \beta \int \sum_i
    \left[\gamma^{-1} \partial^2_x\tau(t, i/N) V'(r_i) + \partial_t
      \tau(t, i/N) (r_i - r(t,i/N))\right] f^N_t d\mu^N_{0,\beta}\\ +
    o(N)
  \end{split}
\end{equation*}
and the rest of the proof follows by the standard arguments of the
relative entropy method (cf. \cite{oe, trem, kipnis, yau}).

\bibliographystyle{amsalpha}

\end{document}